\newcounter{myctr}
\def\myitem{\refstepcounter{myctr}\bibfont\noindent\ifnum\themyctr>9\else\phantom{0}\fi\hangindent17pt\themyctr.\enskip}
\begin{document}

\catchline{}{}{}{}{}

\title{TIGHTER SUPERADDITIVITY RELATIONS FOR $l_{1}$-NORM COHERENCE MEASURE}

\author{Kang-Kang Yang}

\address{School of Mathematical Sciences, Capital Normal University, Beijing 100048, China\\
2220501004@cnu.edu.cn}

\author{Zhong-Xi Shen}

\address{School of Mathematical Sciences, Capital Normal University, Beijing 100048, China\\
18738951378@163.com}

\author{Zhi-Xi Wang}

\address{School of Mathematical Sciences, Capital Normal University, Beijing 100048, China\\
wangzhx@cnu.edu.cn}

\author{Shao-Ming Fei}

\address{School of Mathematical Sciences, Capital Normal University, Beijing 100048, China\\
feishm@cnu.edu.cn}

\maketitle


\begin{abstract}
Quantum coherence serves as a crucial physical resource, with its quantification emerging as a focal point in contemporary research. Superadditivity constitutes one of the most fundamental attributes in characterizing the coherence distribution in multipartite quantum systems. In this paper, we provide a way to derive tighter superadditivity inequalities of $l_1$-norm coherence measure for arbitrary multiqubit states. We present a category of superadditivity relations related to the $\alpha$-th ($\alpha\geqslant 2$) power of $l_{1}$-norm coherence $C_{l_{1}}$ under certain conditions. Our results are better than existing ones and are illustrated in detail with examples.
\end{abstract}

\keywords{Superadditivity relation, $l_1$-norm of coherence, correlated coherence, multiqubit system.}


\markboth{Kang-Kang Yang, Zhong-Xi Shen, Zhi-Xi Wang and Shao-Ming Fei}
{TIGHTER SUPERADDITIVITY RELATIONS FOR $l_{1}$-NORM COHERENCE MEASURE}

\section{Introduction}	

Originated from the superposition principle of quantum mechanics, the quantum coherence is one of the most fundamental properties of quantum mechanics that is different from classical mechanics, and it is an important resource for completing various quantum information processing tasks such as quantum key distribution \cite{S1}, quantum metrology \cite{S2}, low temperature thermodynamics \cite{S3} and quantum biology \cite{S4}. With the development of quantum information science, the quantitative research of coherence resources has attracted extensive attention \cite{S5,S6,S7,S8,S9,S10}. In their seminal work in Ref.\citen{S5}, Baumgratz \emph{et~al.} proposed a frame of resource theory for quantifying coherence and demonstrated that both $l_{1}$-norm coherence and relative entropy coherence are bona fide coherence measures. In particular, the $l_{1}$-norm coherence characterizes the path information in quantifying the wave property in wave-particle duality \cite{S11,S12}, and is related to the speedup in quantum computing \cite{S13}.

In contrast to quantum correlation, which is defined for bipartite or multipartite systems, the quantum coherence depends on the reference bases of the whole system \cite{S14}. The interrelation between coherence and certain quantum correlations has been studied \cite{S15,S16,S17}. Analogous to the trade-off relations satisfied by the quantum entanglement and quantum discord \cite{S18,S19,S20}, the coherence contribution of parts in multipartite systems has also been studied \cite{S21,S22,S23}. The distribution of coherence in multipartite systems is described by the trade-off relations among the coherence of subsystems such as superadditivity \cite{S24,S25}. Superadditivity of coherence measures is an elementary property within the frame of resource theory of coherence.
Let $C$ be a bona fide coherence measure with respect to a reference basis $\left \{ |i\rangle\right \}$, the generalized correlated coherence for a multipartite quantum system is defined as \cite{S16}:
\begin{equation}
\begin{split}
\mathcal{S} = C(\rho_{A_{1}A_{2}\cdots A_{N}})
- \sum_{n=1}^{N} \lambda_{n}C(\rho_{A_{n}}), \label{1}
\end{split}
\end{equation}
where $C(\rho_{A_{1},\dots,A_{N}})$ denotes the total coherence of the $N$-partite quantum  states $\rho_{A_{1},\dots,A_{N}}$, $C(\rho_{A_{n}})$ is the local coherence of the reduced state $\rho_{A_{n}}$ with respect to the basis $\left\{|i\rangle_{A_{n}}\right \}$, the coefficients $\lambda_{n}$ are a set of real numbers satisfying $\lambda_{n}\geqslant 1$.
The superadditivity of $C$ is given by $\mathcal{S}\geqslant 0$. A smaller $\mathcal{S}$ indicates a stronger superadditivity relation.

Tan \emph{et~al.} has demonstrated that $l_{1}$-norm coherence are superadditive for bipartite quantum states \cite{S16}. However, the robustness of coherence measure does not satisfy superadditivity in the bipartite mixed states \cite{A26}. That means, superadditivity depends on the coherence measures adopted, and there is no consistent consensus among them\cite{A27}. When $\lambda_{n}=1$ for all $n$, the correlated coherence $\mathcal{S}$ of a multiqubit system is always positive for the $l_{1}$-norm coherence $C_{l_{1}}$ \cite{S25} and the relative entropy coherence $\mathcal{C}_{r}$ \cite{A28,S27,S14}. That is,
$C(\rho_{A_{1}A_{2}\cdots A_{N}})\geqslant C(\rho_{A_{1}})+C(\rho_{A_{2}})+\cdots+C(\rho_{A_{N}})$. Taking into account the distinct contributions of each local coherence, stronger superadditivity inequalities can be derived by adjusting the coefficients $\lambda_{n}$. Further, the superadditivity related to the $\alpha$-th power of coherence $C_{l_1}$ ($\alpha\geqslant 1$) in multiqubit systems have also been studied \cite{S29,S30,S31}.

In this paper, we first show that the superadditivity inequalities based on the $\alpha$-th ($\alpha\geqslant 1$) power of $C_{l_1}$ can be tighter. Then a category of tighter superadditivity inequalities ($\mathcal{S}\geqslant 0 $) related to the $\alpha$-th ($\alpha\geqslant 2$) power of $l_{1}$-norm coherence $C_{l_{1}}$ are presented for multiqubit systems. We conclude by giving detailed examples to illustrate that the new inequalities are better than all the existing ones.

\section{Preliminary on Superadditivity Relations of the $l_{1}$-norm of Coherence}

Consider a finite-dimensional Hilbert space $\mathcal{H}$ with dimension $d=\dim({\mathcal{H}})$. We fix the reference basis to be the computational basis $\{|i\rangle\}_{i=1}^{d}$. The $l_{1}$-norm of coherence $C_{l_{1}}$ of a quantum state $\rho$ is given by Ref.\citen{S5},
\begin{equation}
\begin{aligned}
C_{l_{1}}(\rho)=\sum\limits_{i\neq j}|\rho_{ij}|,
\end{aligned}
\end{equation}
where $|\rho_{ij}|$ are the magnitudes of the off-diagonal entries of the density matrix $\rho\in\mathcal{H}$.

In particular, for quantum coherence $C_{l_{1}}(\rho_{AB})$ of bipartite states $\rho_{AB}$, the relation (\ref{1}) refers to the following superadditivity one,
\begin{equation}
\begin{aligned}
C_{l_{1}}(\rho_{AB})\geqslant C_{l_{1}}(\rho_{A})+C_{l_{1}}(\rho_{B}).\label{bi1}
\end{aligned}
\end{equation}
The superadditivity inequalities of $C_{l_{1}}$ have been extended to multiqubit systems in Ref.\citen{S25},
\begin{equation}
\begin{aligned}
C_{l_{1}}(\rho_{A_{1}A_{2}\cdots A_{N}})\geqslant C_{l_{1}}(\rho_{A_{1}})+C_{l_{1}}(\rho_{A_{2}})+\cdots+C_{l_{1}}(\rho_{A_{N}}).\label{n1}
\end{aligned}
\end{equation}
For the $\alpha$-th ($\alpha\geqslant 1$) power of $C_{l_1}$, superadditivity relations in $N$-qubit systems has been presented \cite{S29},
\begin{equation}
\begin{split}
C_{l_{1}}^{\alpha}(\rho_{A_{1}A_{2}\cdots A_{N}})
&\geqslant \sum_{n=1}^{N} \lambda_{n}C_{l_{1}}^{\alpha}(\rho_{A_{n}}),\label{41}
\end{split}
\end{equation}
with $\lambda_{n}=1$ for $n\in{\{1,2, \cdots, N}\}$.
Further, strong superadditivity inequality (\ref{41}) holds for $N \geqslant 3$ with
$$
\lambda_{n}= \begin{cases}
(2^{\alpha}-1)^{n-1},\quad &n=1,2, \cdots ,m \\
(2^{\alpha}-1)^{m+1},\quad &n=m+1, \cdots ,N-1\\
(2^{\alpha}-1)^{m},\quad &n=N,
\end{cases}
$$
 where $m$ is one positive integer ($1\leqslant m\leqslant N-2$). It is valid under the conditions that $C_{l_{1}}(\rho_{A_{i}})\geqslant C_{l_{1}}(\rho_{A_{i+1}\cdots A_{N}})$ for $i=1, \cdots ,m$ and $C_{l_{1}}(\rho_{A_{j}})\leqslant C_{l_{1}}(\rho_{A_{j+1}\cdots A_{N}})$ for $j=m+1, \cdots ,N-1$ \cite{S29}. Later, the relation (\ref{41}) has been further improved by replacing $2^{\alpha}-1 $ in the coefficients $\lambda_{n}$ with $\frac{(1+k)^{\alpha}-1}{k^{\alpha}}$, under the conditions  $0<k\leqslant 1$, $kC_{l_{1}}(\rho_{A_{i}})\geqslant C_{l_{1}}(\rho_{A_{i+1}\cdots A_{N}})$ for $i=1, 2,\cdots ,m$, and $kC_{l_{1}}(\rho_{A_{j}})\leqslant C_{l_{1}}(\rho_{A_{j+1}\cdots A_{N}})$ for $j=m+1, \cdots, N-1$ \cite{S30}. In Ref.\citen{S31}, by introducing another real parameter $\delta$ ($\delta\geqslant 1$) and using $k^{\delta}$ instead of $k$, a class of further improved superadditivity relations is derived. Namely, the relation (\ref{41}) holds for
$$
\lambda_{n}= \begin{cases}
\left(\frac{(1+k^{\delta})^{\alpha}-1}{k^{\delta\alpha}}\right)^{n-1},\quad &n=1,2, \cdots ,m. \\
\left(\frac{(1+k^{\delta})^{\alpha}-1}{k^{\delta\alpha}}\right)^{m+1},\quad &n=m+1, \cdots ,N-1.\\
\left(\frac{(1+k^{\delta})^{\alpha}-1}{k^{\delta\alpha}}\right)^{m},\quad &n=N.
\end{cases}
$$
The above superadditivity relation depends on the inequality \cite{S31}:
\begin{equation}\label{ryy}
\begin{aligned}
(1+t)^{\alpha}\geqslant 1+\frac{(1+k^{\delta})^{\alpha}-1}{k^{\delta\alpha}}t^{\alpha}, 0\leqslant t\leqslant 1, \alpha\geqslant 1.
\end{aligned}
\end{equation}
To derive tighter superadditivity relations of $l_{1}$-norm coherence for multiqubit systems, we first prove the following lemmas.

\begin{lemma}
Let $x$ and $\alpha$ be real numbers satisfying $0\leqslant x\leqslant 1$ and $\alpha\geqslant 2$. Then we have
\begin{equation}\label{lemma1}
\begin{aligned}
(1+x)^{\alpha-1}\geqslant1+(\alpha-1)x.
\end{aligned}
\end{equation}
\end{lemma}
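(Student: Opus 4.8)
The plan is to recognize inequality (\ref{lemma1}) as an instance of Bernoulli's inequality with exponent $\alpha-1\geqslant 1$, and to establish it by a single-variable calculus argument. Set $\beta=\alpha-1$, so that the hypothesis $\alpha\geqslant 2$ becomes $\beta\geqslant 1$, and define the auxiliary function $f(x)=(1+x)^{\beta}-1-\beta x$ on the interval $[0,1]$. The goal is to show that $f(x)\geqslant 0$ throughout, which is exactly the claimed inequality after rewriting $\beta$ as $\alpha-1$.

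First I would evaluate $f$ at the left endpoint, finding $f(0)=1-1-0=0$. Next I would differentiate, obtaining $f'(x)=\beta(1+x)^{\beta-1}-\beta=\beta\bigl[(1+x)^{\beta-1}-1\bigr]$. The whole argument then reduces to determining the sign of $f'$ on $[0,1]$, after which a monotonicity-plus-endpoint comparison delivers the conclusion.

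The key step, and the only place where the hypothesis is genuinely used, is the observation that $\beta\geqslant 1$ forces the exponent $\beta-1$ to be nonnegative. Consequently, for $x\geqslant 0$ one has $1+x\geqslant 1$ raised to a nonnegative power, so $(1+x)^{\beta-1}\geqslant 1$, whence $f'(x)\geqslant 0$ on $[0,1]$. I expect this to be the crux: had we only assumed $\alpha\geqslant 1$ (that is, $\beta\geqslant 0$), the exponent $\beta-1$ could be negative and $(1+x)^{\beta-1}$ would fall below $1$, collapsing the monotonicity argument. Thus the threshold $\alpha\geqslant 2$ is precisely what the proof requires, which is why the lemma is stated under that restriction rather than the weaker $\alpha\geqslant 1$ appearing in the earlier relations.

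Finally, combining $f(0)=0$ with the fact that $f$ is nondecreasing on $[0,1]$ yields $f(x)\geqslant f(0)=0$ for every $x\in[0,1]$, which rearranges to $(1+x)^{\alpha-1}\geqslant 1+(\alpha-1)x$ and completes the proof. An alternative route would be to invoke Bernoulli's inequality directly, but the explicit derivative computation is short, self-contained, and makes the role of the assumption $\alpha\geqslant 2$ fully transparent.
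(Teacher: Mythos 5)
Your proof is correct and is essentially identical to the paper's: the paper defines $h(x,\alpha)=(1+x)^{\alpha-1}-(\alpha-1)x-1$, shows $\frac{\partial h}{\partial x}=(\alpha-1)[(1+x)^{\alpha-2}-1]\geqslant 0$ using $\alpha\geqslant 2$, and concludes by monotonicity from the endpoint $x=0$, which is exactly your argument up to the cosmetic substitution $\beta=\alpha-1$. Your explicit remark that $f(0)=0$ and that $\alpha\geqslant 2$ is the precise threshold makes the argument slightly more complete than the paper's terse write-up, but the route is the same.
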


\begin{proof}
Set $h(x,\alpha)=(1+x)^{\alpha-1}-(\alpha-1)x-1$ with $0\leqslant x\leqslant 1$ and $\alpha\geqslant 2$. Since $\frac{\partial h(x,\alpha)}{\partial x}=(\alpha-1)(1+x)^{\alpha-2}-(\alpha-1)=(\alpha-1)[(1+x)^{\alpha-2}-1]\geqslant0$, the function $h(x,\alpha)$ is increasing with respect to $x$. As $0\leqslant x \leqslant 1$, we obtain the inequality (\ref{lemma1}).
\end{proof}

\begin{lemma}
Let $k$, $\delta$ and $\alpha$ be real numbers satisfying $0<k\leqslant 1$, $\delta\geqslant 1$ and $\alpha\geqslant 2$. For any $x$ satisfying $0\leqslant x \leqslant k^{\delta}$, we have
\begin{equation}\label{lemma2}
\begin{aligned}
(1+x)^{\alpha}\geqslant1+x+\frac{(1+k^{\delta})^{\alpha}-k^{\delta}-1}{k^{\delta\alpha}}x^{\alpha}.
\end{aligned}
\end{equation}
\end{lemma}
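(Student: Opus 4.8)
The plan is to reduce the two-variable inequality to a one-variable monotonicity statement and then invoke Lemma~1 directly. Writing $t=k^{\delta}$, the hypotheses $0<k\leqslant 1$ and $\delta\geqslant 1$ give $0<t\leqslant 1$, and the right-hand coefficient is exactly $\frac{(1+t)^{\alpha}-1-t}{t^{\alpha}}$. Since the case $x=0$ is a trivial equality, it suffices to treat $0<x\leqslant t$. Here I would introduce the auxiliary function $f(x)=\frac{(1+x)^{\alpha}-1-x}{x^{\alpha}}$ on $(0,t]$ and observe that, after dividing by $x^{\alpha}>0$, the target inequality (\ref{lemma2}) is equivalent to $f(x)\geqslant f(t)$. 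Because $x\leqslant t$, the whole statement then follows once I show that $f$ is non-increasing on $(0,t]$.

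To establish monotonicity I would differentiate. With $N(x)=(1+x)^{\alpha}-1-x$ one has $f'(x)=\frac{N'(x)\,x-\alpha N(x)}{x^{\alpha+1}}$, so the sign of $f'$ is governed by the numerator $P(x):=N'(x)\,x-\alpha N(x)$. Using $N'(x)=\alpha(1+x)^{\alpha-1}-1$ and collecting the terms $\alpha x(1+x)^{\alpha-1}-\alpha(1+x)^{\alpha}=-\alpha(1+x)^{\alpha-1}$, a short computation yields the clean form
\[
P(x)=\alpha\bigl[1-(1+x)^{\alpha-1}\bigr]+(\alpha-1)x.
\]
The key step is to show $P(x)\leqslant 0$, and this is precisely where Lemma~1 enters: since $0\leqslant x\leqslant t\leqslant 1$ and $\alpha\geqslant 2$, Lemma~1 gives $(1+x)^{\alpha-1}-1\geqslant(\alpha-1)x$, whence $\alpha[1-(1+x)^{\alpha-1}]\leqslant-\alpha(\alpha-1)x$ and therefore $P(x)\leqslant-\alpha(\alpha-1)x+(\alpha-1)x=-(\alpha-1)^{2}x\leqslant 0$. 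Consequently $f'(x)\leqslant 0$ on $(0,t]$, so $f$ is non-increasing there.

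Combining the two parts, for $0<x\leqslant t$ monotonicity gives $f(x)\geqslant f(t)$, which upon multiplying by $x^{\alpha}$ is exactly (\ref{lemma2}), while the boundary point $x=0$ holds with equality. The only genuine calculation is the simplification of $P(x)$, and the only conceptual point is recognizing that the target is the monotonicity of $f$ measured against its right endpoint; I expect the main obstacle to be mere bookkeeping in the derivative, since the decisive estimate $P(x)\leqslant 0$ is handed to us verbatim by Lemma~1. As a safeguard, should the monotonicity route prove awkward to present, an alternative is to set $g(x)=(1+x)^{\alpha}-1-x-\frac{(1+t)^{\alpha}-1-t}{t^{\alpha}}x^{\alpha}$, note $g(0)=g(t)=0$, and use $g''$ to show $g$ is convex on an initial subinterval and concave thereafter (so that it lies above its tangent at $0$ and above the chord to $t$), whence $g\geqslant 0$; but the monotonicity argument is shorter and I would present it as the main proof.
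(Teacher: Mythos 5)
Your proposal is correct and follows essentially the same route as the paper's own proof: you introduce the identical auxiliary function $f(x)=\frac{(1+x)^{\alpha}-1-x}{x^{\alpha}}$, show it is non-increasing on $(0,k^{\delta}]$ by checking that the derivative's numerator $-\alpha(1+x)^{\alpha-1}+(\alpha-1)x+\alpha$ is non-positive via Lemma~1, and conclude $f(x)\geqslant f(k^{\delta})$. Your explicit bound $P(x)\leqslant-(\alpha-1)^{2}x$ is a slightly more quantitative form of the paper's sign claim, but the argument is the same.
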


\begin{proof}
First we note the above inequality is trivial for $x=0$. So we prove the case for $x\neq0$. Consider the function $f(x,\alpha)=\frac{(1+x)^{\alpha}-x-1}{x^{\alpha}}$ with $0<x \leqslant k^{\delta}$, $0<k\leqslant 1$, $\delta\geqslant 1$ and $\alpha\geqslant 2$. Then by Lemma 1 we have
\begin{equation}
\begin{split}
\frac{\partial f(x,\alpha)}{\partial x}&=\frac{[\alpha(1+x)^{\alpha-1}-1]x^{\alpha}-\alpha x^{\alpha-1}[(1+x)^{\alpha}-x-1]}{x^{2\alpha}}\\
&=\frac{x^{\alpha-1}[-\alpha(1+x)^{\alpha-1}+(\alpha-1)x+\alpha]}{x^{2\alpha}}\leqslant 0,
\nonumber
\end{split}
\end{equation}
since $-\alpha(1+x)^{\alpha-1}+(\alpha-1)x+\alpha \leqslant 0$ when $\alpha\geqslant 2$. Therefore, $f(x,\alpha)$ is a decreasing function of $x$. As $0<x \leqslant k^{\delta}$, we obtain $f(x,\alpha)\geqslant f(k^{\delta},\alpha)=\frac{(1+k^{\delta})^{\alpha}-k^{\delta}-1}{k^{\delta\alpha}}$.
\end{proof}

\section{Main Results}

\subsection{Superadditivity of the $\alpha$-th power of multipartite $l_{1}$-norm coherence}

Based on these inequalities in lemmas, we derive the following superadditivity inequality between the total coherence and local coherence for an arbitrary multiqubit state $\rho_{A_{1},\dots,A_{N}}$. For simplicity we denote $\mathcal{C}_{1,2,\dots ,N}$ the coherence $C_{l_{1}}(\rho_{A_{1},\dots,A_{N}})$ and $\mathcal{C}_{n}$ the coherence $C_{l_{1}}(\rho_{A_{n}})$.

\begin{theorem}
Suppose $k$ and $\delta$ are real numbers satisfying $0<k\leqslant 1$ and $\delta\geqslant 1$. For the $\alpha$-th ($\alpha\geqslant 2$) power of the $l_{1}$-norm coherence in $N$-qubit system ($N\geqslant 3$), the following superadditivity inequality holds
\begin{equation}
	\begin{split}
	\mathcal{C}_{1,2,\dots ,N}^{\alpha}
&\geqslant \sum_{n=1}^{N} \Omega_{n}\Gamma^{n-1}\mathcal{C}_{n}^{\alpha},
\label{the1}
	\end{split}
\end{equation}
with $\Gamma =\frac{(1+k^{\delta})^{\alpha}-k^{\delta}-1}{k^{\delta\alpha}}$ and $\Omega_{n}=1+\frac{\mathcal{C}_{n+1,\dots ,N}}{\mathcal{C}_{n}}$, $n=1,2, \cdots, N-1$, $\Omega_{N}=1$,
if $k^{\delta}\mathcal{C}_{n}\geqslant \mathcal{C}_{n+1,\dots ,N}$ for $n=1,2, \cdots, N-1$.
\end{theorem}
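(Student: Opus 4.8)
The plan is to prove the inequality by peeling off one qubit at a time, using at each step the bipartite superadditivity of the $l_1$-norm coherence together with Lemma 2. The engine of the whole argument is the following single-step estimate: for each $n$ with $1\leqslant n\leqslant N-1$ I claim
$$\mathcal{C}_{n,n+1,\dots,N}^{\alpha}\geqslant \Omega_{n}\mathcal{C}_{n}^{\alpha}+\Gamma\,\mathcal{C}_{n+1,\dots,N}^{\alpha}.$$
To establish it, first apply the bipartite superadditivity relation (\ref{bi1}) to the cut $A_{n}\,|\,A_{n+1}\cdots A_{N}$, giving $\mathcal{C}_{n,n+1,\dots,N}\geqslant \mathcal{C}_{n}+\mathcal{C}_{n+1,\dots,N}\geqslant 0$. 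Since $\alpha\geqslant 2$ and $t\mapsto t^{\alpha}$ is increasing on $[0,\infty)$, raising to the $\alpha$-th power preserves the inequality, so $\mathcal{C}_{n,n+1,\dots,N}^{\alpha}\geqslant(\mathcal{C}_{n}+\mathcal{C}_{n+1,\dots,N})^{\alpha}$. I then set $x=\mathcal{C}_{n+1,\dots,N}/\mathcal{C}_{n}$ and factor out $\mathcal{C}_{n}^{\alpha}$ to write the right-hand side as $\mathcal{C}_{n}^{\alpha}(1+x)^{\alpha}$. The hypothesis $k^{\delta}\mathcal{C}_{n}\geqslant\mathcal{C}_{n+1,\dots,N}$ is precisely the statement $0\leqslant x\leqslant k^{\delta}$, so Lemma 2 applies and yields $(1+x)^{\alpha}\geqslant 1+x+\Gamma x^{\alpha}$. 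Multiplying back by $\mathcal{C}_{n}^{\alpha}$ and recognizing that $\mathcal{C}_{n}^{\alpha}(1+x)=\Omega_{n}\mathcal{C}_{n}^{\alpha}$ while $\mathcal{C}_{n}^{\alpha}x^{\alpha}=\mathcal{C}_{n+1,\dots,N}^{\alpha}$ gives the claimed single-step estimate.

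With the single-step estimate in hand I would finish by iteration. Starting from $n=1$ it gives $\mathcal{C}_{1,2,\dots,N}^{\alpha}\geqslant\Omega_{1}\mathcal{C}_{1}^{\alpha}+\Gamma\,\mathcal{C}_{2,\dots,N}^{\alpha}$; applying it again to $\mathcal{C}_{2,\dots,N}^{\alpha}$ and substituting produces $\Omega_{1}\mathcal{C}_{1}^{\alpha}+\Gamma\Omega_{2}\mathcal{C}_{2}^{\alpha}+\Gamma^{2}\mathcal{C}_{3,\dots,N}^{\alpha}$, and each subsequent application multiplies the remaining tail term by a further factor of $\Gamma$. After $N-1$ applications the tail reduces to $\Gamma^{N-1}\mathcal{C}_{N}^{\alpha}$; since $\Omega_{N}=1$ this is exactly the $n=N$ term of the target sum, and collecting everything yields $\mathcal{C}_{1,2,\dots,N}^{\alpha}\geqslant\sum_{n=1}^{N}\Omega_{n}\Gamma^{n-1}\mathcal{C}_{n}^{\alpha}$. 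A clean way to formalize this is a downward induction establishing $\mathcal{C}_{n,\dots,N}^{\alpha}\geqslant\sum_{j=n}^{N}\Omega_{j}\Gamma^{j-n}\mathcal{C}_{j}^{\alpha}$, with the single-step estimate providing the inductive step.

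The routine parts are the two monotonicity observations and the bookkeeping of the $\Gamma$ powers. The one point requiring genuine care is the division by $\mathcal{C}_{n}$ used to form $x$ and $\Omega_{n}$: if $\mathcal{C}_{n}=0$ for some $n$, then the condition $k^{\delta}\mathcal{C}_{n}\geqslant\mathcal{C}_{n+1,\dots,N}\geqslant 0$ forces $\mathcal{C}_{n+1,\dots,N}=0$, whence (again by superadditivity) all of $\mathcal{C}_{n},\dots,\mathcal{C}_{N}$ vanish and the corresponding terms drop out of both sides, so the inequality holds trivially there. I would therefore dispose of this degenerate case at the outset and run the main argument under the assumption $\mathcal{C}_{n}>0$, so that every $x$ and $\Omega_{n}$ is well defined. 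This handling of vanishing local coherences is the only real obstacle; the rest follows mechanically from Lemma 2 and bipartite superadditivity.
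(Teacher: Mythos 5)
Your proof is correct and takes essentially the same route as the paper's: peel off one qubit at a time using the bipartite superadditivity relation (\ref{bi1}), apply Lemma 2 with $x=\mathcal{C}_{n+1,\dots,N}/\mathcal{C}_{n}\leqslant k^{\delta}$ at each step, and iterate so the factors of $\Gamma$ accumulate, with $\Omega_{N}=1$ absorbing the final tail term. Your explicit disposal of the degenerate case $\mathcal{C}_{n}=0$ (which forces all subsequent local coherences to vanish under the hypotheses) is handled by the paper only in Remark 1, but this does not change the substance of the argument.
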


\begin{proof}
From (\ref{bi1}) and (\ref{lemma2}) we have
\begin{equation}
\begin{split}
\mathcal{C}_{1,2,\dots,N}^{\alpha}
&\geqslant
(\mathcal{C}_{1}+\mathcal{C}_{2,\dots,N})^{\alpha}\\
&=\mathcal{C}_{1}^{\alpha}(1+\frac{\mathcal{C}_{2,\cdots,N}}{\mathcal{C}_{1}})^{\alpha}\\
&\geqslant \mathcal{C}_{1}^{\alpha}\left[1+\frac{\mathcal{C}_{2,\cdots,N}}
{\mathcal{C}_{1}}+\frac{(1+k^{\delta})^{\alpha}-1-k^{\delta}}
{k^{\delta\alpha}}\frac{\mathcal{C}_{2,\cdots,N}^{\alpha}}{\mathcal{C}_{1}^{\alpha}}\right]\\
&=\left(1+\frac{\mathcal{C}_{2,\cdots,N}}{\mathcal{C}_{1}}\right)
\mathcal{C}_{1}^{\alpha}+\frac{(1+k^{\delta})^{\alpha}-1-k^{\delta}}
{k^{\delta\alpha}}\mathcal{C}_{2,\cdots,N}^{\alpha}\\
&\geqslant \cdots \\
&\geqslant \left(1+\frac{\mathcal{C}_{2,\cdots,N}}{\mathcal{C}_{1}}\right)
\mathcal{C}_{1}^{\alpha}+\left(1+\frac{\mathcal{C}_{3,\cdots,N}}
{\mathcal{C}_{2}}\right)\frac{(1+k^{\delta})^{\alpha}-1-k^{\delta}}
{k^{\delta\alpha}}\mathcal{C}_{2}^{\alpha}+\cdots\\
&\quad +\left(1+\frac{\mathcal{C}_{N}}{\mathcal{C}_{N-1}}\right)
\left(\frac{(1+k^{\delta})^{\alpha}-1-k^{\delta}}{k^{\delta\alpha}}\right)^{N-2}
\mathcal{C}_{N-1}^{\alpha}+\left(\frac{(1+k^{\delta})^{\alpha}-1-k^{\delta}}
{k^{\delta\alpha}}\right)^{N-1}\mathcal{C}_{N}^{\alpha}\\
\label{thep1}
\end{split}
\end{equation}
for all $\alpha\geqslant2$.
\end{proof}

\begin{remark}
For the superadditivity relation of the $l_{1}$-norm coherence in this paper, it is assumed that $\mathcal{C}_{n}\neq0$ for all $n$. If there exists $\mathcal{C}_{i}=0$, while $\mathcal{C}_{j}\neq0$ for all $j\neq i$, the corresponding term $\mathcal{C}_{i}^{\alpha}$
in the lower bound in Theorem 1 will be removed, together with the subsequent terms with the coefficients multiplied by $\frac{1}{\Gamma}$, as manifested in the proof of the theorem by using the inequalities (\ref{bi1}) and (\ref{lemma2}).
\end{remark}
Theorem 1 gives a new class of inequality relations between the $\alpha$-th power of the multipartite $l_{1}$-norm coherence and the nonlinear combinations of each single qubit coherence under the conditions $k^{\delta}\mathcal{C}_{n}\geqslant \mathcal{C}_{n+1,\dots ,N}$ for $n=1,2, \cdots, N-1$. Note that the coefficients $\Omega_{n}$ depend on $\mathcal{C}_{n}$. The coefficients $\Omega_{n}\Gamma^{n-1}$ in inequality (\ref{the1}) are greater than or equal to 1 for given quantum states and given parameters $k$ and $\delta$. Since generally
\begin{equation}
	\begin{split}
x+\frac{(1+k^{\delta})^{\alpha}-k^{\delta}-1}{k^{\delta\alpha}}x^{\alpha}
&=\frac{(1+k^{\delta})^{\alpha}-1}{k^{\delta\alpha}}x^{\alpha}+k^{\delta}
[\frac{x}{k^{\delta}}-(\frac{x}{k^{\delta}})^{\alpha}]\\
&\geqslant \frac{(1+k^{\delta})^{\alpha}-1}{k^{\delta\alpha}}x^{\alpha}
\nonumber
\end{split}
\end{equation}
for $0<x \leqslant k^{\delta}\leqslant k\leqslant 1$ and $\alpha \geqslant 2$, where the inequality is due to that $\frac{x}{k^{\delta}}-(\frac{x}{k^{\delta}})^{\alpha}\geqslant 0$,
our lower bounds of inequality (\ref{lemma2}) are larger than the existing ones given in Refs.\citen{S29,S30,S31}.

For arbitrary $N$-qubit states, the conditions in Theorem 1 may not be satisfied. The following theorem gives superadditivity inequalities under other scenarios.

\begin{theorem}
If for some positive integer $m$ ($1\leqslant m\leqslant N-2$), $k^{\delta}\mathcal{C}_{n}\geqslant \mathcal{C}_{n+1,\dots ,N}$ for $n=1,2, \cdots, m$, and $\mathcal{C}_{\dot{n}}\leqslant k^{\delta}\mathcal{C}_{\dot{n}+1,\dots ,N}$ for $\dot{n}=m+1, \cdots, N-1$, then we have
\begin{equation}
	\begin{split}
	\mathcal{C}_{1,2,\dots ,N}^{\alpha}
&\geqslant \sum_{n=1}^{m} \Gamma^{n-1}\Omega_{n}\mathcal{C}_{n}^{\alpha}+\Gamma^{m+1}
\mathcal{C}_{m+1}^{\alpha}+\Gamma^{m+1}\sum_{n=m+2}^{N-1}\Upsilon_{m+1}\cdots \Upsilon_{n-1}\mathcal{C}_{n}^{\alpha}+\Gamma^{m}\Upsilon_{m+1}\cdots \Upsilon_{N-1}\mathcal{C}_{N}^{\alpha}
\label{the2}
	\end{split}
\end{equation}
for $\alpha\geqslant 2$, $N\geqslant 3$, $0<k\leqslant 1$ and $\delta\geqslant 1$, where $\Gamma =\frac{(1+k^{\delta})^{\alpha}-k^{\delta}-1}{k^{\delta\alpha}}$, $\Omega_{n}=1+\frac{\mathcal{C}_{n+1,\dots ,N}}{\mathcal{C}_{n}}$ for $n=1,2, \cdots, m$ and $\Upsilon_{\dot{n}}=1+\frac{\mathcal{C}_{\dot{n}}}{\mathcal{C}_{\dot{n}+1,\dots ,N}}$ for $\dot{n}=m+1,\cdots, N-1$.
\end{theorem}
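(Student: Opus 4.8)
The plan is to split the telescoping argument of Theorem 1 into two phases governed by the two opposite sets of hypotheses, reusing Lemma 2 in each phase but factoring out a different coherence at the two stages. First I would run the forward recursion exactly as in the proof of Theorem 1 for the first $m$ levels. Repeatedly applying the bipartite bound (\ref{bi1}) in the form $\mathcal{C}_{n,\dots,N}\geqslant\mathcal{C}_{n}+\mathcal{C}_{n+1,\dots,N}$, factoring out $\mathcal{C}_{n}^{\alpha}$, and invoking Lemma 2 with $x=\mathcal{C}_{n+1,\dots,N}/\mathcal{C}_{n}$ --- which is legitimate because the hypothesis $k^{\delta}\mathcal{C}_{n}\geqslant\mathcal{C}_{n+1,\dots,N}$ guarantees $0\leqslant x\leqslant k^{\delta}$ --- produces after $m$ steps
\[
\mathcal{C}_{1,2,\dots,N}^{\alpha}\geqslant\sum_{n=1}^{m}\Gamma^{n-1}\Omega_{n}\mathcal{C}_{n}^{\alpha}+\Gamma^{m}\mathcal{C}_{m+1,\dots,N}^{\alpha},
\]
with $\Omega_{n}=1+\mathcal{C}_{n+1,\dots,N}/\mathcal{C}_{n}$ arising from the linear part of Lemma 2 and one extra power of $\Gamma$ accumulating on the residual tail at each level.

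The second phase treats the residual $\Gamma^{m}\mathcal{C}_{m+1,\dots,N}^{\alpha}$. Here the hypothesis is reversed, $\mathcal{C}_{\dot{n}}\leqslant k^{\delta}\mathcal{C}_{\dot{n}+1,\dots,N}$, so the single-qubit term is now the \emph{smaller} one. The key adjustment is to factor out the tail $\mathcal{C}_{\dot{n}+1,\dots,N}^{\alpha}$ instead of $\mathcal{C}_{\dot{n}}^{\alpha}$, writing $\mathcal{C}_{\dot{n},\dots,N}\geqslant\mathcal{C}_{\dot{n}}+\mathcal{C}_{\dot{n}+1,\dots,N}$ and applying Lemma 2 with $x=\mathcal{C}_{\dot{n}}/\mathcal{C}_{\dot{n}+1,\dots,N}$; the reversed condition is exactly what places this $x$ in $[0,k^{\delta}]$. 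Each such step yields a factor $\Upsilon_{\dot{n}}=1+\mathcal{C}_{\dot{n}}/\mathcal{C}_{\dot{n}+1,\dots,N}$ multiplying the surviving tail and extracts a term $\Gamma\,\mathcal{C}_{\dot{n}}^{\alpha}$ carrying the accumulated $\Upsilon$ product. Iterating for $\dot{n}=m+1,\dots,N-1$ and collecting the extracted terms gives $\Gamma^{m+1}\mathcal{C}_{m+1}^{\alpha}$ at the first level, $\Gamma^{m+1}\Upsilon_{m+1}\cdots\Upsilon_{n-1}\mathcal{C}_{n}^{\alpha}$ for $m+2\leqslant n\leqslant N-1$, and the final surviving residual $\Gamma^{m}\Upsilon_{m+1}\cdots\Upsilon_{N-1}\mathcal{C}_{N}^{\alpha}$, which together are precisely the terms appearing in (\ref{the2}).

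The main obstacle, and the step requiring the most care, is the bookkeeping of the powers of $\Gamma$ across the junction from $n=m$ to $n=m+1$: in the forward phase the residual accumulates $\Gamma^{m}$ on $\mathcal{C}_{m+1,\dots,N}^{\alpha}$, and the very first reversed step promotes $\mathcal{C}_{m+1}^{\alpha}$ to coefficient $\Gamma^{m+1}$ while the surviving tail retains $\Gamma^{m}$ and merely acquires $\Upsilon_{m+1}$. This asymmetry is exactly why the $\mathcal{C}_{N}^{\alpha}$ term ends with $\Gamma^{m}$ rather than $\Gamma^{m+1}$, and checking that every intermediate coefficient lands precisely as stated is the delicate part of the induction. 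As in Remark 1, I would also note that the argument presumes all $\mathcal{C}_{n}\neq0$ so that the divisions defining $\Omega_{n}$ and $\Upsilon_{\dot{n}}$ are well defined, the degenerate cases being handled by deleting the vanishing terms.
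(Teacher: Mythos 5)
Your proposal is correct and follows essentially the same route as the paper's own proof: the first $m$ levels repeat the Theorem 1 recursion (Lemma 2 with the single-qubit coherence $\mathcal{C}_{n}$ factored out, justified by $k^{\delta}\mathcal{C}_{n}\geqslant\mathcal{C}_{n+1,\dots,N}$), and the residual $\Gamma^{m}\mathcal{C}_{m+1,\dots,N}^{\alpha}$ is then expanded by factoring out the tail and applying Lemma 2 with $x=\mathcal{C}_{\dot{n}}/\mathcal{C}_{\dot{n}+1,\dots,N}$, which the reversed hypotheses keep in $[0,k^{\delta}]$. Your coefficient bookkeeping --- every extracted single-qubit term carries $\Gamma^{m+1}$ times the accumulated $\Upsilon$ product while the surviving $\mathcal{C}_{N}^{\alpha}$ term keeps only $\Gamma^{m}$ --- matches the theorem statement exactly, so nothing is missing.
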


\begin{proof}
According to Theorem 1, we have
\begin{equation}
\begin{split}
\mathcal{C}_{1,2,\dots,N}^{\alpha}
&\geqslant \left(1+\frac{\mathcal{C}_{2,\cdots,N}}{\mathcal{C}_{1}}\right)
\mathcal{C}_{1}^{\alpha}+\left(1+\frac{\mathcal{C}_{3,\cdots,N}}
{\mathcal{C}_{2}}\right)\frac{(1+k^{\delta})^{\alpha}-1-k^{\delta}}
{k^{\delta\alpha}}\mathcal{C}_{2}^{\alpha}+\cdots+
\left(1+\frac{\mathcal{C}_{m+1,\cdots,N}}{\mathcal{C}_{m}}\right)\\
&\quad \times\left(\frac{(1+k^{\delta})^{\alpha}-1-k^{\delta}}
{k^{\delta\alpha}}\right)^{m-1}\mathcal{C}_{m}^{\alpha}
+\left(\frac{(1+k^{\delta})^{\alpha}-1-k^{\delta}}
{k^{\delta\alpha}}\right)^{m}\mathcal{C}_{m+1,\cdots,N}^{\alpha}\\
&=\Omega_{1}\mathcal{C}_{1}^{\alpha}+\Omega_{2}
\Gamma\mathcal{C}_{2}^{\alpha}+\cdots+\Omega_{m}\Gamma^{m-1}
\mathcal{C}_{m}^{\alpha}+\Gamma^{m}\mathcal{C}_{m+1,\cdots,N}^{\alpha}.
\label{thep2}
\end{split}
\end{equation}
Similarly, since $\mathcal{C}_{n}\leqslant k^{\delta}\mathcal{C}_{n+1,\dots ,N}$ for $n=m+1, \cdots, N-1$, we have
\begin{equation}
\begin{split}
\mathcal{C}_{m+1,\cdots,N}^{\alpha}&\geqslant \left(1+\frac{\mathcal{C}_{m+1}}{\mathcal{C}_{m+2,\cdots,N}}\right)
\mathcal{C}_{m+2,\cdots,N}^{\alpha}+\frac{(1+k^{\delta})^{\alpha}-k^{\delta}-1}
{k^{\delta\alpha}}\mathcal{C}_{m+1}^{\alpha}\\
&=\Upsilon_{m+1}\mathcal{C}_{m+2,\cdots,N}^{\alpha}+\Gamma\mathcal{C}_{m+1}^{\alpha}\\
&\geqslant (\Upsilon_{m+1}\cdots\Upsilon_{N-1})\mathcal{C}_{N}^{\alpha}
+\Gamma[(\Upsilon_{m+2}\cdots\Upsilon_{N-1})\mathcal{C}_{N-1}^{\alpha}+\cdots
+\Upsilon_{m+1}\mathcal{C}_{m+2}^{\alpha}+\mathcal{C}_{m+1}^{\alpha}].
\label{the2p3}
\end{split}
\end{equation}
Substituting the inequality (\ref{the2p3}) into (\ref{thep2}), we complete the proof of Theorem 2.
\end{proof}

Theorem 2 provides more comprehensive superadditivity inequalities as it encompasses a broader range of conditions compared to Theorem 1. It is worth emphasizing that Theorems 1 and 2 give only two special classes of superadditivity relations. More generally, for $N$-qubit systems there could be $2^{N-1}$ similar inequalities. Closer relations can be similarly obtained by adopting our approach. For example, for three-qubit states we also have the following conclusion.

\begin{corollary}
For any 3-qubit state $\rho_{A_{1}A_{2}A_{3}}$ satisfying the conditions $\mathcal{C}_{1}\leqslant k^{\delta}\mathcal{C}_{23}$ and $k^{\delta}\mathcal{C}_{2}\geqslant \mathcal{C}_{3}$ ($0<k\leqslant 1$ and $\delta\geqslant 1$), we have
\begin{equation}
\mathcal{C}_{123}^{\alpha}
\geqslant (1+\frac{\mathcal{C}_{1}}{\mathcal{C}_{23}})(1+\frac{\mathcal{C}_{3}}
{\mathcal{C}_{2}})\mathcal{C}_{2}^{\alpha}+(1+\frac{\mathcal{C}_{1}}
{\mathcal{C}_{23}})\frac{(1+k^{\delta})^{\alpha}-1-k^{\delta}}{k^{\delta\alpha}}
\mathcal{C}_{3}^{\alpha}+\frac{(1+k^{\delta})^{\alpha}-1-k^{\delta}}{k^{\delta\alpha}}
\mathcal{C}_{1}^{\alpha}
\label{co}
\end{equation}
for all $\alpha\geqslant2$.
\end{corollary}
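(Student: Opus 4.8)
The plan is to mimic the proof of Theorem~2, but applied to the single three-qubit tier, handling the two subsystems in the order dictated by the two hypotheses. First I would peel off $A_1$ using the bipartite superadditivity~(\ref{bi1}) in the split $A_1|A_2A_3$, giving $\mathcal{C}_{123}^{\alpha}\geqslant(\mathcal{C}_1+\mathcal{C}_{23})^{\alpha}$. Because the relevant hypothesis here is $\mathcal{C}_1\leqslant k^{\delta}\mathcal{C}_{23}$, the smaller quantity is $\mathcal{C}_1$, so I would factor out $\mathcal{C}_{23}^{\alpha}$ and rewrite the bracket as $\mathcal{C}_{23}^{\alpha}(1+\mathcal{C}_1/\mathcal{C}_{23})^{\alpha}$. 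Setting $x=\mathcal{C}_1/\mathcal{C}_{23}\in[0,k^{\delta}]$ and invoking Lemma~2 yields
\[
\mathcal{C}_{123}^{\alpha}\geqslant\left(1+\frac{\mathcal{C}_1}{\mathcal{C}_{23}}\right)\mathcal{C}_{23}^{\alpha}+\Gamma\,\mathcal{C}_1^{\alpha},
\]
with $\Gamma=\frac{(1+k^{\delta})^{\alpha}-k^{\delta}-1}{k^{\delta\alpha}}$.

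The next step is to lower-bound the remaining $\mathcal{C}_{23}^{\alpha}$. Applying~(\ref{bi1}) once more in the split $A_2|A_3$ gives $\mathcal{C}_{23}\geqslant\mathcal{C}_2+\mathcal{C}_3$; now the second hypothesis $k^{\delta}\mathcal{C}_2\geqslant\mathcal{C}_3$ tells me that $\mathcal{C}_3$ is the smaller term, so I would factor out $\mathcal{C}_2^{\alpha}$ and apply Lemma~2 with $x=\mathcal{C}_3/\mathcal{C}_2\in[0,k^{\delta}]$, obtaining
\[
\mathcal{C}_{23}^{\alpha}\geqslant\left(1+\frac{\mathcal{C}_3}{\mathcal{C}_2}\right)\mathcal{C}_2^{\alpha}+\Gamma\,\mathcal{C}_3^{\alpha}.
\]
Finally I would substitute this bound into the displayed inequality from the first step. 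Since the prefactor $1+\mathcal{C}_1/\mathcal{C}_{23}$ is strictly positive, the substitution preserves the direction of the inequality, and expanding the product reproduces exactly the three terms on the right-hand side of~(\ref{co}) after identifying $\Gamma$ with $\frac{(1+k^{\delta})^{\alpha}-1-k^{\delta}}{k^{\delta\alpha}}$.

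I do not anticipate a genuine obstacle: the argument is a direct two-level application of Lemma~2, identical in spirit to Theorems~1 and~2, and differs only in which subsystem is peeled first. The only points requiring care are (i) choosing, at each split, to factor out the \emph{larger} coherence so that the argument fed to Lemma~2 lands in the admissible range $[0,k^{\delta}]$ --- which is precisely what the two hypotheses $\mathcal{C}_1\leqslant k^{\delta}\mathcal{C}_{23}$ and $k^{\delta}\mathcal{C}_2\geqslant\mathcal{C}_3$ guarantee --- and (ii) confirming that the prefactor arising in the substitution is nonnegative so that no inequality flips. One also implicitly assumes $\mathcal{C}_2,\mathcal{C}_{23}\neq0$ so that the ratios are well defined, consistent with the nondegeneracy convention recorded in the Remark following Theorem~1.
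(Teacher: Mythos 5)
Your proposal is correct and follows essentially the same route as the paper's own proof: peel off $A_1$ via the bipartite superadditivity (\ref{bi1}), apply Lemma~2 with $x=\mathcal{C}_1/\mathcal{C}_{23}\leqslant k^{\delta}$, then bound $\mathcal{C}_{23}^{\alpha}$ by splitting $A_2|A_3$ and applying Lemma~2 again with $x=\mathcal{C}_3/\mathcal{C}_2\leqslant k^{\delta}$, and finally substitute and expand. Your attention to factoring out the larger coherence at each split and to the positivity of the prefactor matches the (implicit) care taken in the paper.
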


\begin{proof}
From the proof of Theorem 2, we have
\begin{equation}
\begin{split}
\mathcal{C}_{123}^{\alpha}&\geqslant
\left(\mathcal{C}_{1}+\mathcal{C}_{23}\right)^{\alpha}\\
&=\mathcal{C}_{23}^{\alpha}(1+\frac{\mathcal{C}_{1}}{\mathcal{C}_{23}})^{\alpha}\\
&\geqslant \mathcal{C}_{23}^{\alpha}\left[1+\frac{\mathcal{C}_{1}}{\mathcal{C}_{23}}
+\frac{(1+k^{\delta})^{\alpha}-1-k^{\delta}}{k^{\delta\alpha}}
\frac{\mathcal{C}_{1}^{\alpha}}{\mathcal{C}_{23}^{\alpha}}\right]\\
&=\left(1+\frac{\mathcal{C}_{1}}{\mathcal{C}_{23}}\right)
\mathcal{C}_{23}^{\alpha}+\frac{(1+k^{\delta})^{\alpha}-1-k^{\delta}}
{k^{\delta\alpha}}\mathcal{C}_{1}^{\alpha}\\
&\geqslant \left(1+\frac{\mathcal{C}_{1}}{\mathcal{C}_{23}}\right)
\left(\mathcal{C}_{2}+\mathcal{C}_{3}\right)^{\alpha}
+\frac{(1+k^{\delta})^{\alpha}-1-k^{\delta}}{k^{\delta\alpha}}\mathcal{C}_{1}^{\alpha}\\
&\geqslant \left(1+\frac{\mathcal{C}_{1}}{\mathcal{C}_{23}}\right)
\mathcal{C}_{2}^{\alpha}\left[1+\frac{\mathcal{C}_{3}}
{\mathcal{C}_{2}}+\frac{(1+k^{\delta})^{\alpha}-1-k^{\delta}}{k^{\delta\alpha}}
\frac{\mathcal{C}_{3}^{\alpha}}{\mathcal{C}_{2}^{\alpha}}\right]
+\frac{(1+k^{\delta})^{\alpha}-1-k^{\delta}}{k^{\delta\alpha}}\mathcal{C}_{1}^{\alpha}\\
&=(1+\frac{\mathcal{C}_{1}}{\mathcal{C}_{23}})
(1+\frac{\mathcal{C}_{3}}{\mathcal{C}_{2}})\mathcal{C}_{2}^{\alpha}
+(1+\frac{\mathcal{C}_{1}}{\mathcal{C}_{23}})\frac{(1+k^{\delta})^{\alpha}
-1-k^{\delta}}{k^{\delta\alpha}}\mathcal{C}_{3}^{\alpha}
+\frac{(1+k^{\delta})^{\alpha}-1-k^{\delta}}{k^{\delta\alpha}}\mathcal{C}_{1}^{\alpha},
\label{thep1}
\end{split}
\end{equation}
which completes the proof.
\end{proof}

\begin{example}
Let us consider the three-qubit state:
$$
|\varphi_{A_{1}A_{2}A_{3}}\rangle=\frac{|0\rangle+|1\rangle}{\sqrt{2}}\otimes \frac{|0\rangle+2|1\rangle}{\sqrt{5}}\otimes \frac{|0\rangle+3|1\rangle}{\sqrt{10}}.
$$
Under the computational basis $\{|i\rangle\langle i|\}$, we have the $l_1$-norm coherence $\mathcal{C}_{1}=1$,
$\mathcal{C}_{2}=\frac{4}{5}$, $\mathcal{C}_{3}=\frac{3}{5}$, $\mathcal{C}_{23}=\frac{47}{25}$ and $\mathcal{C}_{123}=\frac{119}{25}$. The lower bound of $\mathcal{C}_{123}^{\alpha}$ with respect to $\alpha$ and $k$ is shown in fig.\ref{f0}(\textbf{a}) for $\delta=2$. The bounds of $\mathcal{C}_{123}^{\alpha}$ with respect to $\alpha$ is plotted in fig.\ref{f0}(\textbf{b}) for fixed $k=\frac{9}{10}$. The corresponding superadditivity lower bound in Ref.\citen{S31} is given by
$$
\mathcal{C}_{123}^{\alpha}
\geqslant \mathcal{C}_{2}^{\alpha}+\frac{(1+k^{\delta})^{\alpha}-1}{k^{\delta\alpha}}
\mathcal{C}_{3}^{\alpha}+\frac{(1+k^{\delta})^{\alpha}-1}{k^{\delta\alpha}}
\mathcal{C}_{1}^{\alpha}.
$$
Obviously our bound in (\ref{co}) is indeed tighter than the ones given in Refs.\citen{S29,S30,S31}, see fig.\ref{f0}.
\end{example}

\begin{figure}[h]\nonumber
\centering
\raisebox{12em}{(\textbf{a})}\includegraphics[width=5.25cm]{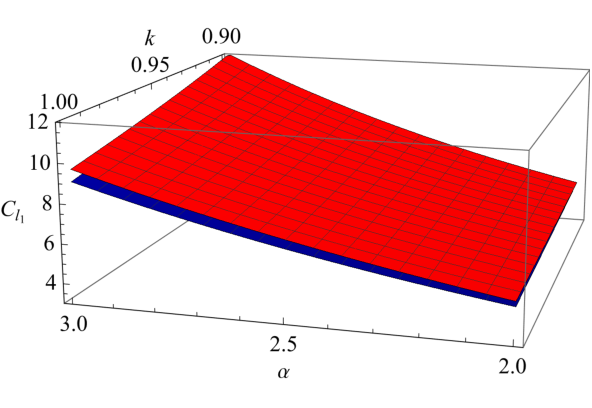}\quad
\raisebox{12em}{(\textbf{b})}\includegraphics[width=5.25cm]{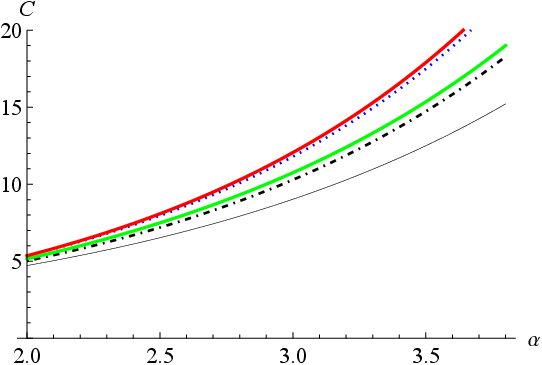}
\end{figure}
\begin{figure}[h]
\begin{center}
\caption{The super-additivity of the $\alpha$-th ($\alpha\geqslant 2$) power of $l_1$-norm coherence for $|\varphi_{123}\rangle$. \textbf{a}. The lower bound for $C_{l_1}^{\alpha}$ with respect to $\alpha$ and $k$: the red surface represents the lower bound in (\ref{co}) and the bottom blue one is from Ref.~\citen{S31}. \textbf{b}. Bounds of $C_{123}^{\alpha}$ with respect to $\alpha$ for $k=0.9$: The red thick and blue dotted curves represent our bound and the lower bound in Ref.~\citen{S31} when $k = 0.9$ and $\delta =2$, respectively. The green thick and the black dash-dotted curves represent the lower bounds in (\ref{co}) and in Ref.~\citen{S30} when $k = 0.9$ and $\delta =1$, respectively. The black thin line is the bound from Ref.~\citen{S29}.}\label{f0}
\end{center}
\end{figure}

\subsection{Improved superadditivity of the $\alpha$-th power of multipartite $l_{1}$-norm coherence}

By choosing different $k_{n}$ ($k_{n}\leqslant k$), new improved superadditivity inequalities can be derived.

\begin{theorem}
For the $\alpha$-th ($\alpha\geqslant 2$) power of $l_{1}$-norm coherence
in $N$-qubit systems ($N\geqslant 3$), the following superadditivity inequality holds,
\begin{equation}
	\begin{split}
	\mathcal{C}_{1,2,\dots ,N}^{\alpha}
&\geqslant \sum_{n=1}^{N} \Omega_{n}\Gamma_{0}\cdots\Gamma_{n-1}\mathcal{C}_{n}^{\alpha},
\label{the3}
	\end{split}
\end{equation}
with $\Gamma_{n} =\frac{(1+k_{n}^{\delta})^{\alpha}-k_{n}^{\delta}-1}{k_{n}^{\delta\alpha}}$, $n=1,2, \cdots, N$, $\Gamma_{0}=1$ and $\Omega_{n}=1+\frac{\mathcal{C}_{n+1,\dots ,N}}{\mathcal{C}_{n}}$, $n=1,2, \cdots, N-1$, $\Omega_{N}=1$,
if $k_{n}^{\delta}\mathcal{C}_{n}\geqslant \mathcal{C}_{n+1,\dots ,N}$, $0<k_{n}\leqslant 1$ for $n=1,2, \cdots, N-1$ and $\delta\geqslant 1$ .
\end{theorem}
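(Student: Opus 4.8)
The plan is to mimic the iterative argument used in the proof of Theorem 1, but to invoke Lemma 2 with the step-dependent parameter $k_n$ at the $n$-th stage rather than a single fixed $k$. The engine of the whole computation is the two-term bound that peels off one qubit at a time: at each stage I first apply the bipartite superadditivity (\ref{bi1}) to write $\mathcal{C}_{n,\dots,N}^{\alpha}\geqslant(\mathcal{C}_{n}+\mathcal{C}_{n+1,\dots,N})^{\alpha}$, and then factor out $\mathcal{C}_{n}^{\alpha}$ to obtain $\mathcal{C}_{n}^{\alpha}(1+x_n)^{\alpha}$ with $x_n=\mathcal{C}_{n+1,\dots,N}/\mathcal{C}_{n}$.

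First I would verify that the hypothesis makes the invocation of Lemma 2 legitimate at each stage. The condition $k_n^{\delta}\mathcal{C}_n\geqslant\mathcal{C}_{n+1,\dots,N}$ is precisely $0\leqslant x_n\leqslant k_n^{\delta}$, so inequality (\ref{lemma2}) with $k=k_n$ yields $(1+x_n)^{\alpha}\geqslant 1+x_n+\Gamma_n x_n^{\alpha}$. Multiplying through by $\mathcal{C}_n^{\alpha}$ and recognizing that $\mathcal{C}_n^{\alpha}(1+x_n)=\Omega_n\mathcal{C}_n^{\alpha}$ and $\mathcal{C}_n^{\alpha}x_n^{\alpha}=\mathcal{C}_{n+1,\dots,N}^{\alpha}$ gives the single-step recursion
$$\mathcal{C}_{n,\dots,N}^{\alpha}\geqslant\Omega_n\mathcal{C}_n^{\alpha}+\Gamma_n\mathcal{C}_{n+1,\dots,N}^{\alpha}.$$

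Next I would iterate this recursion for $n=1,2,\dots,N-1$, substituting the bound for $\mathcal{C}_{n+1,\dots,N}^{\alpha}$ into the previous line. The crucial bookkeeping point is that the coefficient multiplying $\mathcal{C}_{n,\dots,N}^{\alpha}$ after $n-1$ substitutions is the accumulated product $\Gamma_1\cdots\Gamma_{n-1}=\Gamma_0\cdots\Gamma_{n-1}$ (with $\Gamma_0=1$), so each fresh application of the recursion contributes the term $\Omega_n\Gamma_0\cdots\Gamma_{n-1}\mathcal{C}_n^{\alpha}$, while the residual $\mathcal{C}_{N}^{\alpha}$ collects the full product with $\Omega_N=1$. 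After $N-1$ steps this assembles exactly into $\sum_{n=1}^{N}\Omega_n\Gamma_0\cdots\Gamma_{n-1}\mathcal{C}_n^{\alpha}$, which is (\ref{the3}).

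The one point that deserves a remark rather than a real difficulty is the preservation of the inequality direction under substitution: replacing $\mathcal{C}_{n+1,\dots,N}^{\alpha}$ by a smaller quantity is valid only because it is multiplied by the nonnegative factor $\Gamma_1\cdots\Gamma_{n-1}$. Hence I would note that each $\Gamma_n>0$ for $0<k_n\leqslant 1$, $\delta\geqslant1$ and $\alpha\geqslant2$ (indeed $\Gamma_n\geqslant1$, as already observed after Theorem 1), so all accumulated products are positive and every substitution respects the inequality. There is no genuine analytic obstacle here—the entire analytic content is packaged in Lemma 2—and the main care is simply tracking the nested products $\Gamma_0\cdots\Gamma_{n-1}$ correctly through the telescoping substitution.
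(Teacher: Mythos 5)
Your proposal is correct and follows essentially the same route as the paper: peel off one qubit at a time via the bipartite superadditivity relation, apply Lemma 2 with the stage-dependent parameter $k_n$, and iterate so that the products $\Gamma_0\cdots\Gamma_{n-1}$ accumulate on the residual term. The only difference is presentational—the paper writes the iteration as one long chain of inequalities, while you isolate the single-step recursion $\mathcal{C}_{n,\dots,N}^{\alpha}\geqslant\Omega_n\mathcal{C}_n^{\alpha}+\Gamma_n\mathcal{C}_{n+1,\dots,N}^{\alpha}$ and also make explicit the positivity of $\Gamma_n$ needed to justify each substitution, a point the paper leaves implicit.
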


\begin{proof}
Similar to the proof of Theorem 1 we have
\begin{equation}
\begin{split}
\mathcal{C}_{1,2,\dots,N}^{\alpha}
&\geqslant
(\mathcal{C}_{1}+\mathcal{C}_{2,\dots,N})^{\alpha}\\
&=\mathcal{C}_{1}^{\alpha}(1+\frac{\mathcal{C}_{2,\cdots,N}}{\mathcal{C}_{1}})^{\alpha}\\
&\geqslant \mathcal{C}_{1}^{\alpha}\left[1+\frac{\mathcal{C}_{2,\cdots,N}}{\mathcal{C}_{1}}
+\frac{(1+k_{1}^{\delta})^{\alpha}-1-k_{1}^{\delta}}{k_{1}^{\delta\alpha}}
\frac{\mathcal{C}_{2,\cdots,N}^{\alpha}}{\mathcal{C}_{1}^{\alpha}}\right]\\
&=\left(1+\frac{\mathcal{C}_{2,\cdots,N}}{\mathcal{C}_{1}}\right)\mathcal{C}_{1}^{\alpha}
+\frac{(1+k_{1}^{\delta})^{\alpha}-1-k_{1}^{\delta}}{k_{1}^{\delta\alpha}}
\mathcal{C}_{2,\cdots,N}^{\alpha}\\
&\geqslant \left(1+\frac{\mathcal{C}_{2,\cdots,N}}{\mathcal{C}_{1}}\right)
\mathcal{C}_{1}^{\alpha}+\left(1+\frac{\mathcal{C}_{3,\cdots,N}}
{\mathcal{C}_{2}}\right)\frac{(1+k_{1}^{\delta})^{\alpha}-1-k_{1}^{\delta}}
{k_{1}^{\delta\alpha}}\mathcal{C}_{2}^{\alpha}+\cdots\\
&\quad+\left(1+\frac{\mathcal{C}_{N}}{\mathcal{C}_{N-1}}\right)
\left(\frac{(1+k_{1}^{\delta})^{\alpha}-1-k_{1}^{\delta}}{k_{1}^{\delta\alpha}}\right)\cdots \left(\frac{(1+k_{N-2}^{\delta})^{\alpha}-1-k_{N-2}^{\delta}}{k_{N-2}^{\delta\alpha}}\right)
\mathcal{C}_{N-1}^{\alpha}\\
&\quad+\left(\frac{(1+k_{1}^{\delta})^{\alpha}-1-k_{1}^{\delta}}{k_{1}^{\delta\alpha}}\right)\cdots \left(\frac{(1+k_{N-1}^{\delta})^{\alpha}-1-k_{N-1}^{\delta}}
{k_{N-1}^{\delta\alpha}}\right)\mathcal{C}_{N}^{\alpha}
\label{thep3}
\end{split}
\end{equation}
for all $\alpha\geqslant2$. This completes the proof.
\end{proof}

Based on Theorem 2, the following theorem is obtained by using the same method as above.

\begin{theorem}
If there exists positive integer $m$ ($1\leqslant m\leqslant N-2$)
such that $k_{n}^{\delta}\mathcal{C}_{n}\geqslant \mathcal{C}_{n+1,\dots ,N}$ for $n=1,2, \cdots, m$, and $\mathcal{C}_{\dot{n}}\leqslant k^{\delta}_{\dot{n}}\mathcal{C}_{\dot{n}+1,\dots ,N}$ for $\dot{n}=m+1, \cdots, N-1$, we have
\begin{equation}
	\begin{split}
	\mathcal{C}_{1,2,\dots ,N}^{\alpha}
&\geqslant \sum_{n=1}^{m} \Gamma_{0}\cdots\Gamma_{n-1}\Omega_{n}\mathcal{C}_{n}^{\alpha}
+(\Gamma_{0}\cdots\Gamma_{m+1})\mathcal{C}_{m+1}^{\alpha}+\Gamma_{0}\cdots\Gamma_{m+1}\\
&\quad \times\left(\sum_{n=m+2}^{N-1}\Upsilon_{m+1}\cdots \Upsilon_{n-1}\Gamma_{n}\mathcal{C}_{n}^{\alpha}+\Upsilon_{m+1}\cdots \Upsilon_{N-1}\mathcal{C}_{n}^{\alpha}\right)
\label{the4}
\end{split}
\end{equation}
for $\alpha\geqslant 2$, $N\geqslant 3$, $0<k_{n}\leqslant 1$, $\delta\geqslant 1$ and $\Gamma_{n} =\frac{(1+k_{n}^{\delta})^{\alpha}-k_{n}^{\delta}-1}{k_{n}^{\delta\alpha}}$, $n=1,2, \cdots, N$, $\Gamma_{0}=1$, $\Omega_{n}=1+\frac{\mathcal{C}_{n+1,\dots ,N}}{\mathcal{C}_{n}}$, $n=1,2, \cdots, m$,
and $\Upsilon_{\dot{n}}=1+\frac{\mathcal{C}_{\dot{n}}}{\mathcal{C}_{\dot{n}+1,\dots ,N}}$, $\dot{n}=m+1,\cdots, N-1$.
\end{theorem}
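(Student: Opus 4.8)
The plan is to prove Theorem 4 by fusing the variable-parameter iteration of Theorem 3 (applied to the head subsystems $A_{1},\dots,A_{m}$) with the reversed expansion used in the second half of Theorem 2 (applied to the tail $A_{m+1},\dots,A_{N}$), replacing the single constant $\Gamma$ everywhere by the step-dependent coefficients $\Gamma_{n}=\frac{(1+k_{n}^{\delta})^{\alpha}-k_{n}^{\delta}-1}{k_{n}^{\delta\alpha}}$. At every stage the only analytic input is Lemma 2, which I invoke with the parameter $k_{n}$ attached to the current subsystem; the two families of hypotheses in the statement are precisely what keep the argument $x$ of Lemma 2 inside the admissible range $0\leqslant x\leqslant k_{n}^{\delta}$.

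First I would treat the head $n=1,\dots,m$ exactly as in the proof of Theorem 3. Using the bipartite subadditivity (\ref{bi1}) in the form $\mathcal{C}_{n,\dots,N}\geqslant\mathcal{C}_{n}+\mathcal{C}_{n+1,\dots,N}$, I write $(\mathcal{C}_{n}+\mathcal{C}_{n+1,\dots,N})^{\alpha}=\mathcal{C}_{n}^{\alpha}(1+\mathcal{C}_{n+1,\dots,N}/\mathcal{C}_{n})^{\alpha}$ and apply Lemma 2 with $x=\mathcal{C}_{n+1,\dots,N}/\mathcal{C}_{n}$; the hypothesis $k_{n}^{\delta}\mathcal{C}_{n}\geqslant\mathcal{C}_{n+1,\dots,N}$ guarantees $x\leqslant k_{n}^{\delta}$, giving
\[
\mathcal{C}_{n,\dots,N}^{\alpha}\geqslant\Omega_{n}\,\mathcal{C}_{n}^{\alpha}+\Gamma_{n}\,\mathcal{C}_{n+1,\dots,N}^{\alpha}.
\]
Iterating from $n=1$ to $n=m$ peels off one single-qubit term per step and multiplies the residual tail by a further factor $\Gamma_{n}$, leaving, with the convention $\Gamma_{0}=1$,
\[
\mathcal{C}_{1,\dots,N}^{\alpha}\geqslant\sum_{n=1}^{m}\Gamma_{0}\cdots\Gamma_{n-1}\,\Omega_{n}\,\mathcal{C}_{n}^{\alpha}+\Gamma_{0}\cdots\Gamma_{m}\,\mathcal{C}_{m+1,\dots,N}^{\alpha}.
\]

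Next I would expand the residual $\mathcal{C}_{m+1,\dots,N}^{\alpha}$ in the reversed direction, following the second half of the proof of Theorem 2 but with step-dependent $k_{\dot{n}}$. For $\dot{n}=m+1,\dots,N-1$ the hypothesis $\mathcal{C}_{\dot{n}}\leqslant k_{\dot{n}}^{\delta}\mathcal{C}_{\dot{n}+1,\dots,N}$ now lets me factor out the \emph{larger} tail coherence, writing $(\mathcal{C}_{\dot{n}}+\mathcal{C}_{\dot{n}+1,\dots,N})^{\alpha}=\mathcal{C}_{\dot{n}+1,\dots,N}^{\alpha}(1+\mathcal{C}_{\dot{n}}/\mathcal{C}_{\dot{n}+1,\dots,N})^{\alpha}$ and applying Lemma 2 with $x=\mathcal{C}_{\dot{n}}/\mathcal{C}_{\dot{n}+1,\dots,N}\leqslant k_{\dot{n}}^{\delta}$ to get $\mathcal{C}_{\dot{n},\dots,N}^{\alpha}\geqslant\Upsilon_{\dot{n}}\mathcal{C}_{\dot{n}+1,\dots,N}^{\alpha}+\Gamma_{\dot{n}}\mathcal{C}_{\dot{n}}^{\alpha}$. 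Iterating from $\dot{n}=m+1$ to $N-1$ produces an expansion of $\mathcal{C}_{m+1,\dots,N}^{\alpha}$ in which $\mathcal{C}_{\dot{n}}^{\alpha}$ carries the coefficient $\Upsilon_{m+1}\cdots\Upsilon_{\dot{n}-1}\Gamma_{\dot{n}}$ and the terminal $\mathcal{C}_{N}^{\alpha}$ carries $\Upsilon_{m+1}\cdots\Upsilon_{N-1}$. Substituting this back into the head estimate and distributing the common factor $\Gamma_{0}\cdots\Gamma_{m}$ collects each $\mathcal{C}_{n}^{\alpha}$ with its stated coefficient, completing the proof.

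The main obstacle is not any single inequality — every step is a clean instance of Lemma 2 — but the bookkeeping of the nested products of $\Gamma_{n}$ and $\Upsilon_{\dot{n}}$. One must check that combining $\Gamma_{0}\cdots\Gamma_{m}$ from the head with the extra $\Gamma_{m+1}$ generated at the first reversed step yields the coefficient $\Gamma_{0}\cdots\Gamma_{m+1}$ on $\mathcal{C}_{m+1}^{\alpha}$, and then track carefully which later terms inherit an additional $\Gamma_{\dot{n}}$ (the interior terms $m+2\leqslant n\leqslant N-1$) versus only $\Upsilon$-factors (the terminal term $\mathcal{C}_{N}^{\alpha}$); it is at exactly this stage that the indexing in the displayed coefficients must be matched with care. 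Attention is likewise needed at the boundary case $m=N-2$, where the interior sum is empty, and for vanishing local coherences $\mathcal{C}_{n}=0$, which are handled by deleting the corresponding terms exactly as in the Remark following Theorem 1.
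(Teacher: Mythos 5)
Your strategy coincides with the paper's: the paper's entire ``proof'' of this theorem is the single sentence that it ``is obtained by using the same method as above,'' i.e.\ by rerunning the proof of Theorem 2 with the step-dependent parameters $k_{n}$ of Theorem 3, and your two-stage iteration (peeling off the head terms $n=1,\dots,m$ with Lemma 2 at parameter $k_{n}$, then expanding the residual $\mathcal{C}_{m+1,\dots,N}^{\alpha}$ in the reversed direction with Lemma 2 at parameter $k_{\dot{n}}$) is exactly that method, correctly fleshed out. Each individual step is valid: the hypotheses place the argument of Lemma 2 inside $[0,k_{n}^{\delta}]$ at every stage, and each application gives $\mathcal{C}_{n,\dots,N}^{\alpha}\geqslant\Omega_{n}\mathcal{C}_{n}^{\alpha}+\Gamma_{n}\mathcal{C}_{n+1,\dots,N}^{\alpha}$ (head) or $\mathcal{C}_{\dot{n},\dots,N}^{\alpha}\geqslant\Upsilon_{\dot{n}}\mathcal{C}_{\dot{n}+1,\dots,N}^{\alpha}+\Gamma_{\dot{n}}\mathcal{C}_{\dot{n}}^{\alpha}$ (tail). (Incidentally, inequality (\ref{bi1}) is superadditivity, not ``subadditivity.'')

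There is, however, a genuine defect in your closing step, at precisely the bookkeeping point you yourself flagged: distributing the common factor $\Gamma_{0}\cdots\Gamma_{m}$ does \emph{not} ``collect each $\mathcal{C}_{n}^{\alpha}$ with its stated coefficient.'' What your (correct) derivation yields is
\[
\mathcal{C}_{1,\dots,N}^{\alpha}\geqslant \sum_{n=1}^{m}\Gamma_{0}\cdots\Gamma_{n-1}\Omega_{n}\mathcal{C}_{n}^{\alpha}
+\Gamma_{0}\cdots\Gamma_{m}\Gamma_{m+1}\mathcal{C}_{m+1}^{\alpha}
+\Gamma_{0}\cdots\Gamma_{m}\left(\sum_{n=m+2}^{N-1}\Upsilon_{m+1}\cdots\Upsilon_{n-1}\Gamma_{n}\mathcal{C}_{n}^{\alpha}
+\Upsilon_{m+1}\cdots\Upsilon_{N-1}\mathcal{C}_{N}^{\alpha}\right),
\]
i.e.\ the prefactor of the bracketed tail is $\Gamma_{0}\cdots\Gamma_{m}$, whereas the printed inequality (\ref{the4}) puts $\Gamma_{0}\cdots\Gamma_{m+1}$ there (the coefficient of $\mathcal{C}_{m+1}^{\alpha}$ alone agrees). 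Since Lemma 2 shows $f(x,\alpha)=\frac{(1+x)^{\alpha}-x-1}{x^{\alpha}}$ is decreasing, $\Gamma_{m+1}=f(k_{m+1}^{\delta},\alpha)\geqslant f(1,\alpha)=2^{\alpha}-2\geqslant 2$ for $\alpha\geqslant 2$, so the printed bound strictly exceeds what the method delivers on every term with $n\geqslant m+2$ and on the terminal term. A consistency check shows the printed display is a misprint rather than a stronger statement you failed to reach: setting all $k_{n}=k$, your bound reduces exactly to Theorem 2 (interior coefficients $\Gamma^{m+1}\Upsilon_{m+1}\cdots\Upsilon_{n-1}$, terminal coefficient $\Gamma^{m}\Upsilon_{m+1}\cdots\Upsilon_{N-1}$), while the printed coefficients would become $\Gamma^{m+2}\Upsilon_{m+1}\cdots\Upsilon_{n-1}$ and $\Gamma^{m+1}\Upsilon_{m+1}\cdots\Upsilon_{N-1}$, strictly stronger than Theorem 2; likewise the final $\mathcal{C}_{n}^{\alpha}$ in the display must read $\mathcal{C}_{N}^{\alpha}$, which you silently corrected. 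So your argument proves the corrected version above; as written, your last sentence asserts an identity of coefficients with the printed statement that does not hold, and the discrepancy should be stated explicitly rather than passed over.
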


Notice that in Theorems 1 and 2 the common parameter $k$ has been taken. Due to the fact that the function $f(x,\alpha)=\frac{(1+x)^{\alpha}-x-1}{x^{\alpha}}$ is decreasing with respect to $x$ ($0<x \leqslant k_{n}^{\delta}\leqslant k^{\delta}\leqslant 1$) for $\alpha\geqslant 2$, we have
$$
\Gamma_{n} =\frac{(1+k_{n}^{\delta})^{\alpha}-k_{n}^{\delta}-1}{k_{n}^{\delta\alpha}}\geqslant
\frac{(1+k^{\delta})^{\alpha}-k^{\delta}-1}{k^{\delta\alpha}}=\Gamma
$$
for $n\in{\{1,2, \cdots, N-1}\}$ and $\delta\geqslant 1$.
Therefore, the superadditivity inequalities in Theorems 3 and 4 are tighter than the ones in Theorems 1 and 2. We give another example below to illustrate the point.

\begin{example}
We consider the three-qubit pure state in Schmidt decomposition \cite{S32,S33},
\begin{eqnarray}
|\psi_{A_{1}A_{2}A_{3}} \rangle =\lambda _{0}|000\rangle +\lambda _{1}e^{i\phi }|100\rangle
+\lambda _{2}|101\rangle  \notag +\lambda _{3}|110\rangle +\lambda _{4}|111\rangle ,  \label{sd-1}
\end{eqnarray}
where $\lambda _{i}\geqslant 0$, $i=0,1,\cdots ,4$, $0\leqslant \phi <2\pi $ and $\sum \lambda _{i}^{2}=1$. Setting $\phi=0$ and $\lambda _{i}=\frac{1}{\sqrt{5}}$, we have
$\mathcal{C}_{1}=\frac{2}{5}$, $\mathcal{C}_{2}=\mathcal{C}_{3}=\frac{4}{5}$, $\mathcal{C}_{23}=\frac{12}{5}$ and $\mathcal{C}_{123}=\frac{18}{5}$.
Since $\mathcal{C}_{2}=\mathcal{C}_{3}$, we can only take $k_{2}$ to be 1 in this case.
When $\delta=2$, we obtain
\begin{equation*}
\begin{split}
\mathcal{S}_1 &\equiv \mathcal{C}_{123}^{\alpha}- (1+\frac{\mathcal{C}_{1}}{\mathcal{C}_{23}})(1+\frac{\mathcal{C}_{2}}
{\mathcal{C}_{3}})\mathcal{C}_{3}^{\alpha}-(1+\frac{\mathcal{C}_{1}}
{\mathcal{C}_{23}})\frac{(1+k_{2}^{\delta})^{\alpha}-1-k_{2}^{\delta}}
{k_{2}^{\delta\alpha}}\mathcal{C}_{2}^{\alpha}
-\frac{(1+k_{1}^{\delta})^{\alpha}-1-k_{1}^{\delta}}{k_{1}^{\delta\alpha}}
\mathcal{C}_{1}^{\alpha}\\
&=\left(\frac{18}{5}\right)^{\alpha}-\left(\frac{7}{3}\right)
\times\left(\frac{4}{5}\right)^{\alpha}-\left(\frac{7}{6}\right)
\times\left(2^{\alpha}-2\right)\times\left(\frac{4}{5}\right)^{\alpha}
-\frac{(1+k_{1}^{2})^{\alpha}-1-k_{1}^{2}}{k_{1}^{2\alpha}}
\times\left(\frac{2}{5}\right)^{\alpha}.
\nonumber
\end{split}
\end{equation*}
While from Ref.\citen{S31} one has
\begin{equation*}
\begin{split}
\mathcal{S}_2 &\equiv \mathcal{C}_{123}^{\alpha}-\mathcal{C}_{3}^{\alpha}-\frac{(1+k_{2}^{\delta})^{\alpha}-1}
{k_{2}^{\delta\alpha}}\mathcal{C}_{2}^{\alpha}-\frac{(1+k_{1}^{\delta})^{\alpha}-1}
{k_{1}^{\delta\alpha}}\mathcal{C}_{1}^{\alpha}\\
&=\left(\frac{18}{5}\right)^{\alpha}-\left(\frac{4}{5}\right)^{\alpha}
-\left(2^{\alpha}-1\right)\times\left(\frac{4}{5}\right)^{\alpha}
-\frac{(1+k_{1}^{2})^{\alpha}-1}{k_{1}^{2\alpha}}\times\left(\frac{2}{5}\right)^{\alpha}.
\nonumber
\end{split}
\end{equation*}
\end{example}
Fig. \ref{f1}(\textbf{a}) shows that our superadditivity inequality is tighter.
When $k_{1}^{\delta}=\frac{1}{6}$ and $k_{2}=1$, the lower bound of our superadditivity inequality is tighter than that given by Ref.\citen{S31}. As shown in fig. \ref{f1}(\textbf{b}), Theorem 4 gives a lower bound that is significantly tighter than the bound in Theorem 2 when $k_{1}=k_{2}=1$. When $k$ takes the same value 1, the result in Ref.\citen{S31} is exactly the lower bound given in Ref.\citen{S29}. Moreover, our lower bound is strictly larger than the bound in Ref.\citen{S29} for $k_{1}=k_{2}=1$. Thus it is verified that our results are all stronger than the ones given in Ref.\citen{S31}, whenever the $k_n$ is the same or not.

\begin{figure}[h]\nonumber
\centering
\raisebox{12em}{(\textbf{a})}\includegraphics[width=5.25cm]{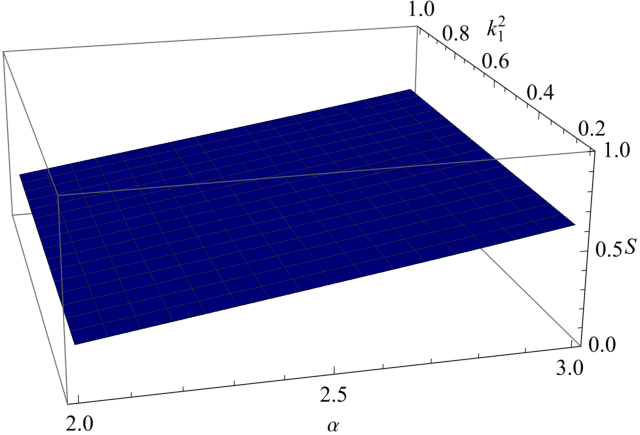}\quad\quad\quad
\raisebox{12em}{(\textbf{b})}\includegraphics[width=5.25cm]{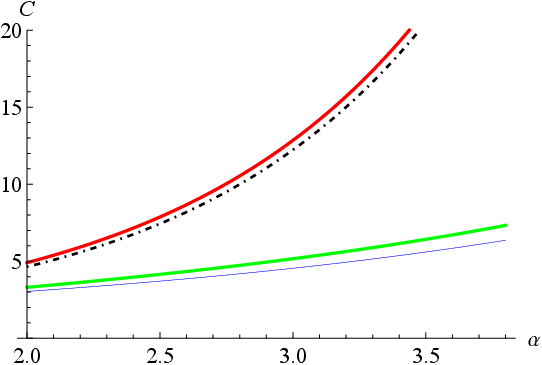}
\end{figure}
\begin{figure}[h]
\begin{center}
\caption{The superadditivity of the $\alpha$-th ($\alpha\geqslant 2$) power of $l_1$-norm coherence for $|\psi_{123} \rangle$. \textbf{a}. The blue surface represents $\mathcal{S}_2-\mathcal{S}_1>0$ for the states $|\psi_{123} \rangle$ with respect to $\alpha$ and $k_{1}$. \textbf{b}. Bounds for $l_1$-norm coherence $C_{l_1}^{\alpha}$ for $k_{2}=1$ and $\delta =1$: the red thick and the black dash-dotted curves represent our bound and the bound from Ref.\citen{S31}, respectively, when $k_{1} = \frac{1}{6}$. The blue thin curve represents the lower bound when $k_{1} = 1$ in Ref.\citen{S31}. The green thick line is the bound when $k_{1}=k_{2} = 1$ in Theorem 2.}\label{f1}
\end{center}
\end{figure}

\section{Conclusion}
In this paper, the multipartite coherence distribution is described as the superadditivity relations between the total coherence and the local coherence of single qubits, quantified by the $l_1$-norm coherence measure $C_{l_1}$. We have derived a set of stronger superadditivity inequalities related to the $C_{l_1}^{\alpha}$ ($\alpha\geqslant 2$) for multiqubit systems. We have different general forms of the inequalities, depending on the types of conditions that the considered state satisfies. These inequalities can be used to characterize the coherence distributions among the multiqubit systems. It is interesting to note that the authors in Refs.\citen{S34,S35,S36} have demonstrated that the correlated coherence $\mathcal{S}=\mathcal{C}_{123}-\mathcal{C}_{1}-\mathcal{C}_{2}-\mathcal{C}_{3}$, as a new coherence measure for three-qubit systems, has monogamy properties. In this sense, our results may shed new lights on exploring other forms of coherence distribution relations.

\section*{Acknowledgments}

This work was supported by the NSFC (grant nos. 12075159 and 12171044), the specific research fund of the Innovation Platform for Academician of Hainan Province under Grant No. YSPTZX202215.

\end{document}